\crefname{figure}{Figure}{Figures}
\crefname{example}{Example}{Examples}
\crefname{theorem}{Theorem}{Theorems}
\crefname{lemma}{Lemma}{Lemmas}
\crefname{proposition}{Proposition}{Propositions}
\crefname{claim}{Claim}{Claims}
\crefname{conjecture}{Conjecture}{Conjectures}
\crefname{section}{Section}{Sections}
\crefname{definition}{Definition}{Definitions}
\colorlet{colscc}{brown}
\newcommand{\myparagraph}[1]{\paragraph{#1.}}
\title{Cutwidth Bounds via Vertex Partitions}
\author{Antoine Amarilli$^1$, Benoît Groz$^2$\\
$^1$: Univ.\ Lille, Inria, CNRS, Centralle Lille, UMR 9189 CRIStAL\\
$^2$: Paris-Saclay University, CNRS, LISN\\
\texttt{antoine.a.amarilli@inria.fr}, \texttt{benoit.groz@lisn.upsaclay.fr}
}
\date{}
\begin{document}

\maketitle

\begin{abstract}
  We study the cutwidth measure on graphs and ways to bound the cutwidth of a
  graph by partitioning its vertices. We consider bounds expressed as a function
  of two quantities: on the one hand, the maximal cutwidth $y$ of the subgraphs
  induced by the classes of the partition, and on the other hand, the cutwidth
  $x$ of the quotient multigraph obtained by merging each class to a single
  vertex. We consider in particular the decomposition of directed graphs into
  strongly connected components (SCCs): in this case, $y$ is the maximal
  cutwidth of an SCC, and $x$ is the cutwidth of the directed acyclic
  condensation multigraph.

  We show that the cutwidth of a graph is always in $O(x+y)$, specifically it
  can be upper bounded by $1.5x + y$. We also show a lower bound 
  justifying that the constant $1.5$ cannot be improved in general
\end{abstract}

\section{Introduction}

The measure of \emph{cutwidth} (see \cite{cutwidth}) is a parameter on undirected
graphs which intuitively measures the best performance of a linear ordering of
the vertices in terms of how many edges are ``cut''. Cutwidth is related to
\emph{pathwidth}: it is always as least as high as the pathwidth, and the two
parameters coincide up to constant factors when the maximal degree of the graph
is bounded by a constant. One point of cutwidth, like pathwidth and treewidth,
is that bounding it can ensure the tractability of some problems on graphs: when
the input graph is assumed to have constant cutwidth, then we can design
dynamic programming algorithms on the input graph.

In this paper, we study how to show upper bounds on the treewidth of graphs via
the approach of partitioning its vertices into classes, and studying on the one
hand the cutwidth of each class, and on the other hand the cutwidth of the
\emph{quotient} obtained by merging each class to a single vertex. As the
quotient may in general feature parallel edges, and cutwidth accounts for these
parallel edges, then we define the quotient as a multigraph, and work with
multigraphs throughout the paper. Let us assume that we can show an upper bound
of $x$ on the cutwidth of the quotient multigraph, and that we can show an
upper bound of $y$ on the maximal cutwidth of the classes of the partition. We would
then want to show an upper bound on the cutwidth of the original graph, as a
function of $x$ and $y$. Intuitively, we want to do so by ordering the classes
according to the order that achieves cutwidth $x$ on the quotient multigraph,
and then ordering each class according to the order that achieves cutwidth at
most $y$ within each class.

This technique of bounding cutwidth via vertex partitions has been studied
earlier. \cite{barth1995bandwidth} have studied how to bound the
bandwidth or cutwidth of a graph as a function of the same measure on the
quotient graph and on the classes of the partition; however, their bounds depend on
the degree of the graph. More recently, \cite{FeldmannM23}
have applied the method in the specific case where the input graph is directed,
and the classes of the partition are the strongly connected components (SCCs):
the quotient multigraph is then called the \emph{condensation multigraph}. Their
work \cite[Lemma~2.2]{FeldmannM23} claims an upper bound of $x+y$ on the
cutwidth of the graph $G$, again as a function of the (undirected) cutwidth $x$
of the condensation multigraph of $G$ and of the maximal (undirected) cutwidth
$y$ of an SCC of~$G$. This standalone lemma of~\cite{FeldmannM23} is of independent interest, and we
recently applied it to our study of the edge-minimum walk of modular length
problem in~\cite{amarilli2025edge}.

In this paper, we specifically study this question of bounding the cutwidth of a
graph using vertex partitions, namely, as a function of the cutwidth $x$ of the
quotient multigraph and of the maximal cutwidth $y$ of a class. 
We show that the proof of \cite[Lemma~2.2]{FeldmannM23} contains an oversight:
as we explain, their technique shows an upper bound of $2x+y$ and not
$x+y$. Intuitively, the issue is that the order defined on the initial
graph~$G$, during its enumeration of the vertices of a class $C$, may cut some
edges connecting $C$ to both preceding classes and succeeding classes in the
ordering on the quotient multigraph: this means that we may cut more edges than
the maximal number $x$ that we achieve on the quotient multigraph when we
enumerate all vertices of~$C$ at once.

This bounds of $2x+y$ leaves open the question of whether we can improve it, in
particular to reach the bound of $x+y$ claimed in \cite{FeldmannM23}. Our main
technical contribution is to show (in \cref{thm:upper}) a bound of $1.5x+y$.
This bound is obtained via a more careful choice of ordering on the initial
graph $G$: we still order the classes of the partition in the order that
achieves the cutwidth of~$x$ on the quotient multigraph, but for each class
of~$G$ we choose among a minimum-cutwidth ordering (achieving cutwidth at most
$y$) and its reverse (achieving the same cutwidth), intuitively depending on
which one least worsens the cutwidth bound on the quotient multigraph.

Our result achieves a cutwidth bound of $1.5x+y$ and not $x+y$, but as we show,
this is unavoidable (and the bound of $x+y$
claimed in \cite[Lemma~2.2]{FeldmannM23} is in fact not correct).
Specifically, we show in \cref{prp:lower-bound-condensation} that, for
arbitrarily large $x$ and $y$, there are graphs whose cutwidth is $1.5x + y$:
this implies that the factor~$1.5$ cannot be improved in general.

\paragraph*{Paper structure.}
We give preliminaries in \cref{sec:prelim}. We show our upper bound of
$1.5x+y$ (\cref{thm:upper}) in \cref{sec:upper}, and show our lower bound 
(\cref{prp:lower-bound-condensation})
in \cref{sec:lower}. We conclude in \cref{sec:conc}.

\paragraph*{Acknowledgements.}
We are grateful to the authors of \cite{FeldmannM23} for confirming the
issue with \cite[Lemma~2.2]{FeldmannM23}.

\section{Preliminaries}
\label{sec:prelim}
\nosectionappendix

We work with undirected and directed graphs and also with
undirected and directed \emph{multigraphs}, in which edges have multiplicities.
We sometimes talk of a \emph{simple} graph to mean a graph
which is not a multigraph, i.e., the special case of a multigraph where no edge
has multiplicity greater than~$1$.
Formally, we represent multigraphs using \emph{multisets} of edges, where a \emph{multiset}
is a set in which some elements can be repeated (or, equivalently, have a
\emph{multiplicity}, which is a positive integer). 
We write $|S|$ to denote the cardinality of a multiset $S$ (summing over all multiplicities).
We use double curly braces for multisets and 
for multiset comprehension notation, e.g., for $S$ a multiset, if we write $S'
\coloneq \{\{(x,
y) \mid x, y \in S\}\}$, then we mean that $S'$ is the multiset of ordered pairs of elements
of~$S$, taking each element of $S$ with its multiplicity (so that 
$|S'| = (|S|)^2$).
For $S_1$ and $S_2$ two multisets, we write $S_1 \subseteq S_2$ to mean that the
elements of $S_1$ are a subset of that of $S_2$ and the multiplicity of each
element of~$S_1$ is no greater than its multiplicity in~$S_2$.

An \emph{undirected multigraph} $(V, E)$
consists of a set $V$ of vertices and a multiset $E$ of undirected edges
(i.e., pairs of vertices of~$V$), and a \emph{directed multigraph} $(V, E)$ consists of a set $V$ of
vertices and a multiset $E$ of directed edges of the form $(u, v)$ with $u \neq
v$ for $u,v \in V$. Note that we never allow self-loops in graphs.
Given a directed
multigraph $G = (V, E)$, the \emph{underlying undirected multigraph} $G' =
(V, E')$ of~$G$ is obtained
by forgetting the orientation of edges, i.e., we set $E' \coloneq \{\{\{u, v\} \mid (u, v) \in
E\}\}$. Note that $G'$ may be a multigraph even when $G$ is a simple graph,
e.g., if $E = \{(1,2), (2,1)\}$ then $E' = \{\{\{1, 2\}, \{1, 2\}\}\}$.

\myparagraph{Cutwidth}
We consider the measure of \emph{cutwidth}, defined on undirected 
graphs and multigraphs. We will also consider the cutwidth of directed graphs
and multigraphs, but we always define the cutwidth of such graphs to be that of their underlying undirected
multigraphs (i.e., we never consider definitions of cutwidth for directed
graphs, such as the ones of \cite{chudnovsky2012tournament}).
Given a undirected (multi)graph $G = (V, E)$,
a \emph{cut} $V_-, V_+$ of~$V$
is a partition of $V$ into two sets $V_-$ and $V_+$, i.e., $V = V_- \sqcup V_+$
where $\sqcup$ denotes disjoint union.
We say that an edge $e \in E$ \emph{crosses the cut} $V_-,V_+$  if it
intersects both sides, namely, $e \cap V_- \neq \emptyset$ and $e \cap V_+ \neq
\emptyset$.
Then the
\emph{cutwidth} of the cut $V_-,V_+$ is the number of edges that cross the cut, counted
together with their multiplicity, i.e., it is 
$|\{\{e \in E \mid e \cap V_-
\neq \emptyset \text{ and } e \cap V_+ \neq \emptyset\}\}|$.
We then define an \emph{ordering} of~$G$ as a strict total order $<$ on~$V$,
and say that a cut $V_-,
V_+$ of~$V$
\emph{respects}~$<$ 
if we have 
$v_- < v_+$ for each $(v_-, v_+) \in V_- \times V_+$. 
The \emph{cutwidth} of the ordering~$<$ is then the maximum cutwidth of
a cut that respects~$<$. Note that, by symmetry, the cutwidth of an ordering
$<$ is
always the same as that of the reverse ordering $>$ defined by 
$x > y$ iff $y < x$.
Finally, the \emph{cutwidth} of~$G$ is defined as the minimum cutwidth
of an ordering of~$G$.

\myparagraph{Subdivisions}
Letting $G = (V, E)$ be an undirected multigraph, and $e= \{u,v\}$ be (one occurrence of) an
edge, a \emph{subdivision of~$G$ on~$e$} is a multigraph obtained by
removing (one occurrence of) $e$ from~$E$  and replacing it by a path of length 2
that goes via a fresh intermediary vertex. Formally, it is an undirected
multigraph $(V', E')$ obtained by
letting $V' \coloneq V \sqcup \{w\}$ where $w$ is a fresh vertex, and letting
$E'$ be constructed from $E$ by removing (one occurrence of) $e$ and adding the
edges $\{u, w\}$ and $\{w, u\}$. Note that, if we take a multigraph and
perform subdivision on every edge occurrence, then the result is
always a simple graph.
More generally, we define a \emph{multiedge-subdivision of~$G$} as a graph $G' = (V', E')$
as above, but where we consider any number $m$ of occurrences of $e$, with
$m$ is at least~$1$ and at most the number of occurrences of~$e$ in~$E$,
and replace them with $m$ occurrences of the edges $\{u, w\}$ and $\{w, u\}$
(all sharing the same intermediate vertex~$w$). Note that for $m=1$ we recover
the definition of subdivisions given earlier.

We will need the following result about cutwidth and (multiedge-) subdivisions.
In the case of subdivisions, the results appears to be folklore and appears for
instance in~\cite{MakedonS89}. We provide a self-contained proof in
Appendix~\ref{apx:subdiv}
to show that the result also holds for multiedge-subdivisions.

\begin{toappendix}
  \section{Proof of \cref{prp:subdiv}}
  \label{apx:subdiv}
\end{toappendix}

\begin{propositionrep}
  \label{prp:subdiv}
  Let $G = (V,E)$ be a multigraph, and let $G'$ be obtained from $G$ by a multiedge-subdivision.
  Then the cutwidth of~$G'$ is equal to the cutwidth of~$G$. 
\end{propositionrep}

\begin{proof}
  Let $k$ be the cutwidth of~$G$, and let $k'$ be that of~$G'$.
  Let $w$ be the fresh vertex introduced by the multiedge-subdivision, $e = \{u,
  v\}$ the (multi)edge to which we apply the subdivision, $M$ the multiplicity
  of~$e$ in~$G$, and $1\leq m \leq M$ the number of occurrences of $e$ that we
  subdivide: we replace them in~$G'$ by $M$ occurrences of $e_u \coloneq \{u,w\}$ and $e_v \coloneq \{v,w\}$.
  To prove that $k = k'$,
    we will first show that $k'\leq k$ by inserting $w$ at an arbitrary position
    between $u$ and $v$ in an ordering $<$ of $G$ having cutwidth $k$: we will
    show that the resulting ordering $<'$ of $G'$ has cutwidth $\leq k$.
   Reciprocally, given an ordering of $G'$, we will consider the ordering
   induced on~$G$ by omitting $w$, and show that $k\leq k'$.

  \bigskip
  
  We first show the first direction ($k'\leq k$). Given an ordering $<$ of~$V$ which achieves cutwidth~$k$
  for~$G$, we assume up to reversing~$<$ that $u$ occurs before~$v$
  in~$<$. We then build an ordering ${<'}$ of~$V'$ by inserting the fresh
  vertex~$w$ at an arbitrary position between $u$ and~$v$. For any cut
  $V_-', V_+'$ of~$<'$,
  considering the corresponding cut $V_-, V_+$ of~$<$, every edge crossing the cut $V_-', V_+'$ in~$G'$ except the
  two fresh edges $e_u$ and $e_v$ also crosses the cut $V_-, V_+$. For the two fresh multiedges, we know that:
  \begin{itemize}
    \item $e_u$ crosses the cut precisely when $u \in V_-'$ but $w \in V_+'$, and in this
      case the edge~$e$ crosses the cut $V_-, V_+$ because $v$ comes after $w$ in
      $<'$ so $v \in V_+$ and of course $u \in V_-$;
    \item $e_v$ crosses the cut precisely when $w \in V_-'$ but $v \in V_+'$, and in this
      case also the edge~$e$ crosses the cut $V_-, V_+$ because $u$ comes before~$w$
      in~$<'$ so $u \in V_-$ and $v \in V_+$.
  \end{itemize}
  It follows that only one of the multiedges $e_u$ or $e_v$ crosses the cut, and
  then the $m$ occurrences of that multiedge which are crossed in~$<'$
  correspond to the $m$ additional occurrences of $e$ that are cut in~$<$. Thus, the cutwidth of $V_-', V_+'$ in $G'$ is no more than the cutwidth of $V_-, V_+$ in $G$, so the
  cutwidth of~${<'}$ is at most $k$, which establishes that
  $k' \leq k$.

  \bigskip

  We then show the second direction ($k\leq k'$). Given an ordering ${<'}$ of~$V'$ which achieves
  cutwidth~$k'$ for~$V'$, we assume again up to reversing ${<'}$ that $u$ occurs
  before~$v$. We build the ordering $<$ of~$V$ simply by removing the
  fresh vertex~$w$. Consider any cut $V_-, V_+$ of~$<$, and consider some cut
  $V_-', V_+'$ of~${<'}$ which is equal to $V_-, V_+$ 
  up to the fresh vertex, i.e., we have
  $V_- \subseteq V_-' \subseteq V_- \sqcup \{w\}$ and likewise
  $V_+ \subseteq V_+' \subseteq V_+ \sqcup \{w\}$. For any edge of~$G$ which is
  cut by~$V_-, V_+$ and is different from~$e$, we know that the same edge
  in~$G'$ is also cut in~$V_-', V_+'$. Now, the edge~$e$ crosses the cut precisely
  when $u \in V_-$ but $v \in V_+$. In this case, we have $u \in V_-'$ and $v
  \in V_+'$, so it must be the case that one of the edges $e_u$ and $e_v$
  crosses the cut $V_-', V_+'$, which compensates the $m$ additional occurrences
  of~$e$ in~$G$ that are cut by $V_-,V_+$. Hence, the cutwidth of~$V_-, V_+$ is no more than that
  of~$V_-', V_+'$, so that $k \leq k'$. This shows that $k = k'$ and concludes
  the proof.
\end{proof}

\myparagraph{Vertex partitionings}
We will bound the cutwidth of undirected (multi)graphs $G$ using \emph{vertex
partitionings}, which we now define. A \emph{vertex partitioning} of a set~$V$
of vertices is simply a partition $P = \{C_1, \ldots, C_k\}$ of~$V$: each
\emph{class} $C_i$ for $1 \leq i \leq k$ is a nonempty subset of~$V$, the classes are pairwise disjoint, and 
their union
is~$V$. Given a vertex partitioning of the vertices of an undirected (multi)graph~$G = (V, E)$, we can distinguish two kinds of
edges of~$G$: the \emph{internal edges}, where both endpoints belong to the same
class; and the \emph{external edges}, which connect endpoints belonging to
different classes.

A \emph{subgraph} of a (multi)graph~$G = (V, E)$ is simply a (multi)graph $(V, E')$ with $E'
\subseteq E$, and the subgraph \emph{induced} by a subset $C \subseteq V$  of
vertices is the subgraph $(V, E_{|C})$ where we keep the internal edges of
$E$ for which both endpoints are in~$C$ (with their multiplicities), formally,
$E_{|C} \coloneq \{\{e \in E \mid
e \subseteq C\}\}$. We will often abuse notation and identify the classes of~$P$
with the subgraphs of~$G$ that they induce.

From the undirected (multi)graph~$G$ and the vertex partition $P =
\{C_1, \ldots, C_k\}$, in addition to
the subgraphs induced by the $C_i$ for $1 \leq i \leq k$, we
will also consider the \emph{quotient multigraph} where we
intuitively merge each class of the partition into a single vertex, and only
keep the external edges between these vertices. Formally,
let $\phi_P \colon V \to P$ be the function that maps each vertex
of~$V$ to its class in~$P$. The \emph{quotient
multigraph} $G/P$ of~$G$
under~$P$ is then the undirected multigraph with vertex set $P$, and
with the multiset of edges $\{\{\{\phi_P(u), \phi_P(v)\} \mid \{u, v\} \in
E \text{~s.t.~} \phi_P(u) \neq \phi_P(v)\}\}$. Note that the quotient
multigraph is in general a multigraph even if $G$ is a simple graph, e.g., if
$G$ contains the edges $\{1, 3\}$ and $\{2, 3\}$ and $P$ contains the classes
$\{1, 2\}$ and $\{3\}$.

We also define quotient graphs over directed (multi)graphs in the expected way,
in particular for the case of SCCs that we define next. Formally, for~$G = (V,
E)$ a directed (multi)graph, for $P$ a vertex partitioning of~$V$,
and for $\phi_P \colon V \to P$ the function mapping the vertices of~$V$ to
their classes in~$P$,
the \emph{quotient multigraph} of~$G$ by~$P$
is the directed multigraph $G/P$ whose vertex set is~$P$ and whose multiset of edges is
$\{\{(\phi_P(u), \phi_P(v)) \mid (u, v) \in E\}\}$.

\myparagraph{SCCs}
One special case of vertex partitioning which we will study
is the well-known decomposition of
directed (multi)graphs into \emph{Strongly Connected
Components} (SCCs). Specifically, we say that two vertices $u$ and $v$ of a directed
(multi)graph~$G = (V, E)$ are in the
same SCC if there is a directed path from~$u$ to~$v$ in~$G$ and a directed
path from~$v$ to~$u$ in~$G$: this relation is clearly an equivalence relation,
so it defines a vertex partitioning $P$. Note that, for each class $C$ of~$P$,
the subgraphs of~$G$ induced by the classes of~$P$ are exactly the strongly
connected components of~$G$ in the usual sense.
Further, the quotient multigraph $G/P$ is then called the \emph{condensation multigraph}
$G'$ of~$G$, i.e., $G'$ is the directed acyclic multigraph obtained by condensing the SCCs
of~$G$.
Note that, in most of the works that
consider the decomposition of a directed graph into SCCs,
the condensation graph is defined as a directed acyclic simple graph over
SCCs: but in this
note we see $G'$ as a (directed acyclic) \emph{multigraph}
(like in~\cite{FeldmannM23}),
with the multiplicity of every edge from $C$
to~$C'$ in~$G'$ intuitively denoting how many edges of~$G$ go from a vertex
in~$C$ to a vertex in~$C'$. This distinction will be important when studying the
cutwidth of the condensation multigraph.

\section{Upper bound}
\label{sec:upper}

In this section, we show the following general result on vertex partitionings of
undirected multigraphs:

\begin{theorem}
  \label{thm:upper}
Let $G = (V,E)$ be an undirected multigraph,
  $P$ a partition of $G$, and $G/P$ be the
quotient multigraph of~$G$ under~$P$.
  If the cutwidth of $G/P$ is $x$ and the cutwidth of every class of $P$ is at most $y$, then the cutwidth of $G$ is at most $1.5x + y$.
\end{theorem}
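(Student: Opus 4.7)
The plan is to build an ordering of~$G$ by laying down the classes of~$P$ in the order given by an ordering $\prec$ of~$G/P$ witnessing cutwidth $x$, and, within each class $C_i$, ordering its vertices using either a fixed minimum-cutwidth ordering $\prec_i$ or its reverse $\reverse{\prec_i}$, where the direction will be chosen independently per class. Cuts that fall at a class boundary inherit cutwidth at most $x$ from $G/P$ (no internal edges contribute there), so the entire analysis reduces to cuts that lie strictly inside some class~$C_i$.

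For such a cut inside $C_i$, I would partition the crossing edges of~$G$ into: internal edges of~$C_i$ (bounded by $y$ since $\prec_i$ has cutwidth at most~$y$), ``skipping'' external edges with both endpoints outside~$C_i$ on opposite sides of the cut (count $a_i$), and ``attaching'' external edges with exactly one endpoint in~$C_i$, further split according to whether that endpoint lies on the left side $L$ or the right side $R$ of the cut. Letting $b_i$, resp.\ $c_i$, denote the total number of external edges between~$C_i$ and earlier, resp.\ later, classes, the cuts of~$G/P$ adjacent to $C_i$ yield $a_i + b_i \le x$ and $a_i + c_i \le x$. Writing $g(t) \coloneq b_R(t) + c_L(t)$ for the external contribution at the $t$-th internal cut under $\prec_i$, the cut has total size at most $a_i + g(t) + y$; reversing $\prec_i$ replaces $g(t)$ by $g^{\mathrm{R}}(t) = b_i + c_i - g(m_i - t)$, because swapping $L$ and $R$ at the mirrored position turns $b_R + c_L$ into $b_L + c_R$.

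The heart of the argument is a combinatorial lemma stating
\[
  \min\!\Bigl(\max_t g(t),\; \max_t g^{\mathrm{R}}(t)\Bigr) \;\le\; \max(b_i, c_i) + \tfrac{1}{2}\min(b_i, c_i),
\]
which I would prove by contradiction: assuming WLOG $b_i \ge c_i$ and supposing that both $\max_t g(t) > b_i + c_i/2$ and $\min_t g(t) < c_i/2$, I pick witnesses $t_1, t_2$ of these strict inequalities and use that $b_R$ is nonincreasing with values in $[0, b_i]$ while $c_L$ is nondecreasing with values in $[0, c_i]$. Whichever of $t_1 < t_2$ or $t_1 > t_2$ holds, one of the two summands in $g$ cannot vary along the interval $[t_1,t_2]$ in the helpful direction, so that $|g(t_1) - g(t_2)|$ is bounded by $\max(b_i, c_i) = b_i$, which falls strictly short of the gap (larger than $b_i$) demanded by the assumption.

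With the lemma in hand, one chooses for each class $C_i$ independently whichever of $\prec_i$ or $\reverse{\prec_i}$ realises the smaller of $\max_t g$ and $\max_t g^{\mathrm{R}}$; combined with $b_i, c_i \le x - a_i$, this yields $a_i + \max_t g(t) \le a_i + \tfrac{3}{2}(x - a_i) \le \tfrac{3}{2} x$ for the chosen direction, so every cut of the resulting ordering of~$G$ has size at most $1.5x + y$, as claimed. The main obstacle is identifying the correct ``$\max + \tfrac{1}{2}\min$'' threshold and executing the contradiction argument cleanly; the tightness of the constant $1.5$ is then confirmed separately by~\cref{prp:lower-bound-condensation}.
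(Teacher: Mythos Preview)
Your proposal is correct and follows essentially the same strategy as the paper: lay out the classes according to an optimal ordering of $G/P$, choose for each class between $\prec_i$ and its reverse, and show via a case split on the relative position of two extremal cuts that one of the two directions keeps the external contribution bounded by $1.5x$. Your combinatorial lemma $\min\bigl(\max_t g(t),\max_t g^{\mathrm{R}}(t)\bigr)\le \max(b_i,c_i)+\tfrac{1}{2}\min(b_i,c_i)$ is a slightly sharper repackaging of the paper's \cref{clm:choice}, and your contradiction with witnesses $t_1,t_2$ is the symmetric counterpart of the paper's argument (fix the worst cut for $\prec_i$, then range over all cuts for the reverse); both rest on the same monotonicity of $b_R$ and $c_L$ and the same before/after case analysis.
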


This result holds for arbitrary vertex partitionings. If we instantiate it to
the specific case of vertex partitioning defined by SCCs on
directed graphs, we obtain the following corollary:

\begin{corollary}
Let $G$ be a directed multigraph
  and $G'$ be its condensation multigraph.
  If the cutwidth of $G'$ is $x$ and the cutwidth of every SCC of $G$ is at most $y$, then the cutwidth of $G$ is at most $1.5x + y$.
\end{corollary}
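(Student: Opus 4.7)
The plan is to fix an ordering $<_Q$ of the quotient multigraph $G/P$ achieving cutwidth $x$, which totally orders the classes as $C_1 < \cdots < C_k$, and to fix for each class $C_i$ a minimum-cutwidth internal ordering of $C_i$ with cutwidth at most $y$. The global ordering of $V$ will list the classes in the order prescribed by $<_Q$, and within each class will use either the fixed internal ordering or its reverse, with the choice made per class based on that class's external structure. Any cut falling between two consecutive classes matches a cut of $<_Q$ and thus has cutwidth at most $x$, so the heart of the proof will be at cuts internal to some class $C = C_i$.

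Consider such an internal cut at position $j$, which splits $C$ into $C_-^j$ (the first $j$ vertices of $C$ in the chosen direction) and $C_+^j$. Let $a$ and $b$ denote the total numbers of external edges from $C$ to earlier and later classes respectively, and $e$ the number of external edges with one endpoint in an earlier class and the other in a later class (those that cross the internal cut without touching $C$). The quotient cuts of $<_Q$ just before and just after $C_i$ give $e + a \le x$ and $e + b \le x$, hence also $e + (a+b)/2 \le x$. The edges crossing the internal cut consist of: internal edges of $C$, contributing at most $y$ by the choice of internal ordering; the $e$ external edges avoiding $C$; and $s(j) \coloneq f(j) + g(j)$ further external edges touching $C$, where $f(j)$ counts edges from $C_-^j$ to later classes and $g(j)$ counts edges from $C_+^j$ to earlier classes.

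The key step is a range bound on $s$: because $f$ is non-decreasing (from $0$ to $b$) while $g$ is non-increasing (from $a$ to $0$), taking $j^*, j_*$ that maximize and minimize $s$, monotonicity forces one of $f(j^*) - f(j_*)$ and $g(j^*) - g(j_*)$ to be non-positive, which lets one bound $\max_j s(j) - \min_j s(j) \le \max(a,b) \le x$. Reversing the internal ordering of $C$ transforms $s$ into $s^r(j) = a + b - s(|C| - j)$, so that $\max_j s^r(j) = a + b - \min_j s(j)$. The per-class orientation choice is then: if $\max_j s(j) \le (a+b)/2 + x/2$, keep the original ordering; otherwise the range bound forces $\min_j s(j) > (a+b)/2 - x/2$, hence $\max_j s^r(j) < (a+b)/2 + x/2$, and we use the reverse. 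Either way, the external contribution at any internal cut of $C$ is at most $e + (a+b)/2 + x/2 \le x + x/2 = 1.5x$, and adding the at-most-$y$ internal contribution yields the claimed $1.5x + y$ bound. The main obstacle is establishing this range bound on $s$ and combining it with the per-class orientation choice; without it one only gets the trivial $\max_j s(j) \le a + b \le 2x$, which is exactly what produces the weaker $2x + y$ bound identified in the introduction as the correct reading of the Feldmann-Marx argument.
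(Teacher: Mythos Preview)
Your argument is correct and follows the same overall strategy as the paper: order the classes by an optimal ordering of~$G/P$, and within each class choose between a minimum-cutwidth ordering and its reverse. Both proofs reduce to the same key lemma, namely that for each class $C$ at least one of the two orientations keeps the number of crossed external edges at every internal cut of~$C$ below $1.5x$.

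Where you differ is in how you prove that lemma. The paper fixes the worst cut $C_-,C_+$ for the forward orientation, assumes it exceeds $1.5x$, derives $|E_{V_-',C_-}|+|E_{C_+,V_+'}|<0.5x$ from the two quotient-cut inequalities, and then does a two-case analysis (cuts ``before'' versus ``after'' the fixed one) to bound every cut in the reverse orientation. Your argument instead parametrizes the external contribution as $s(j)=f(j)+g(j)$ with $f$ non-decreasing and $g$ non-increasing, observes that this forces the range $\max_j s(j)-\min_j s(j)\le \max(a,b)\le x$, and then uses the symmetry $\max_j s^r(j)=a+b-\min_j s(j)$ under reversal to conclude that whichever orientation has the smaller maximum stays below $(a+b)/2+x/2$; adding $e$ and invoking $e+(a+b)/2\le x$ gives $1.5x$. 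This is a clean and slightly more symmetric alternative to the paper's case split: it makes explicit that reversal interchanges $\max s$ with $a+b-\min s$, so the range bound alone pins down the better orientation, whereas the paper's argument hides this symmetry behind the before/after dichotomy. Both approaches are short and yield the same constant; yours has the advantage of isolating the range bound as the single combinatorial fact that drives the improvement from $2x$ to $1.5x$.
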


Remember that, here, the cutwidths are always taken on the underlying undirected multigraphs. The statement of the corollary is similar to~\cite[Lemma~2.2]{FeldmannM23}, except that the
bounds are worse ($1.5x + y$ instead of $x+y$). 
In the rest of this section, we prove Theorem~\ref{thm:upper}. We first show a
weaker bound of $2x+y$ with a simple approach that uses the ordering defined
in~\cite{FeldmannM23}.
We next show how a careful choice of ordering lowers the bound to $1.5x+y$.
We will show in the next section that this bound cannot be improved in
general.

\myparagraph{Simple bound in $\bm{2x+y}$}
Let $G = (V,E)$ be an undirected multigraph and $P$ be a vertex partitioning of $G$.
Remember that we identify each class $C$ of~$P$ with the subgraph of~$G$ that it induces.
We will define an ordering of~$G$, and we will require that it is defined by
first choosing an ordering on the classes of~$P$, and then ordering the vertices
inside each class~$C$ according to some ordering on~$C$. Following the terminology
of~\cite[Section~2.3]{barth1995bandwidth}, we say that such an ordering of~$G$
is \emph{compatible} with~$P$:
we will only consider such orderings in the proof. 

For each class $C\in P$, we let $<_C$ be an ordering on $C$ that achieves a
cutwidth of at most~$y$ on~$C$.
Further, let $<_{G/P}$ be an ordering on the quotient multigraph $G/P$ that
achieves the optimal cutwidth of~$x$ on $G/P$. We now define a compatible ordering
$<$ of $G$ like in~\cite{FeldmannM23}: we first order
vertices following the order of classes given by $<_{G/P}$, and then inside each class~$C$ we order the vertices
according to the ordering $<_C$.
Formally, for $u, v \in V$, letting $C_i$ and $C_j$ be the respective classes of
$u$ and~$v$, we set $u < v$ if we have $C_i <_{G/P} C_j$ or if we have $C_i=C_j$ and $u
<_C v$ for $C_i = C = C_j$.

Let $C$ be a class, and let $V_-'$ (resp.\ $V_+'$) consist of all vertices of
classes before $C$ (resp.\ after $C$) in $<_{G/P}$. 
Any cut $C_-, C_+$ of $C$ induces a cut $V_-'\sqcup C_-, C_+\sqcup V_+'$ on $G$ that
respects the ordering~$<$.
Let $\Xi$ be the multiset of edges of~$E$ that cross the cut. 
We can partition $\Xi$ into the following disjoint sets:
\[
  \Xi = \Xi_{C, C} \sqcup \Xi_{V_-', C} \sqcup \Xi_{C, V_+'} \sqcup \Xi_{V_-', V_+'},
\]
where $\Xi_{X,Y}$ is the set of edges crossing the cut with one extremity in $X$ and the other in $Y$.
For instance, $\Xi_{C, C}$ is the subset of the edges internal to $C$ which cross the cut, whereas $\Xi_{V_-', C}$ are the edges with one  extremity in $V_-'$ and the other extremity in $C$ which cross the cut (and therefore $\Xi_{V_-', C} = \Xi_{V_-', C_+}$).

The number of edges $|\Xi_{C, C}|$ is bounded by $y$ because the restriction of $<$
to~$C$ achieves a cutwidth of at most~$y$, and the quantities $|\Xi_{V_-', C}| + |\Xi_{V_-', V_+'}|$ 
and $|\Xi_{C, V_+'}| + |\Xi_{V_-', V_+'}|$ 
are each bounded by $x$
by considering the order $<_{G/P}$ on the quotient multigraph.
Hence, the number of edges crossing the cut is at most  
$|\Xi_{V_-', C}| + |\Xi_{C, V_+'}| + |\Xi_{V_-', V_+'}| + |\Xi_{C, C}| \leq 2x + y$.
We have thus proved that ordering $<$ has cutwidth at most $2x + y$.

\myparagraph{More elaborate bound in $\bm{1.5x+y}$}
To show the claimed bound of $1.5x+y$, we now define the ordering $<$ more
carefully. We will still order vertices first according to the order $<_{G/P}$,
so that in particular the order is still compatible:
but then within each class $C$ we will choose between the ordering $<_C$ and its reverse $>_C$.

Let $C$ be a class, and let $V_-'$ (resp.\ $V_+'$) consist of all vertices of
classes before $C$ (resp.\ after $C$) in $<_{G/P}$. 
As we explained previously, every cut $C_-, C_+$ of $C$ according to $<_C$
induces a cut $V_-'\sqcup C_-, C_+\sqcup V_+'$ on $G$.
The same split of $C$ into $C_+, C_-$ is also a cut of $C$ according to $>_C$
where the vertices of $C_+$ now come before those of $C_-$ and this cut
according to $>_C$ induces a cut $V_-'\sqcup C_+, C_-\sqcup V_+'$ on $G$.

We can partition the edges of $G$ as follows, where $E_{X,Y}$ denotes the edges having an extremity in $X$ and the other in $Y$. The partitioning is similar to the one above, except that we partition all the edges and not only those crossing the cut of $C$ according to $<_C$, and we distinguish more edge types:
\begin{equation}
  \label{eq:decomp}
  E = E_{V_-', V_-'}
\sqcup E_{V_+', V_+'}
\sqcup E_{C, C}
\sqcup E_{V_-', C_-}
\sqcup E_{C_+, V_+'}
\sqcup E_{V_-', C_+}
\sqcup E_{C_-, V_+'}
\sqcup E_{V_-', V_+'}
\end{equation}

\begin{itemize}
\item The edges from $E_{V_-', V_-'} \sqcup E_{V_+', V_+'}$ (internal to other
  classes than~$C$, or connecting classes located on the same side of the cut):
    these edges will never cross the cut (no matter whether we order vertices of
    the various classes $C'$ according to $<_{C'}$ or according to $>_{C'}$).

\item The edges from $E_{C,C}$ are internal to $C$: at most~$y$ of these cross the cut, because we will order the vertices of~$C$ either according to~$<_C$ or according to~$>_C$ (and both achieve the same cutwidth).

\item The edges from $E_{V_-', C_+} \sqcup E_{C_-, V_+'}$
cross the cut $V_-' \sqcup C_-, C_+ \sqcup V_+'$ according to $<_C$
(but not the cut $V_-'\sqcup C_+, C_-\sqcup V_+'$ according to $>_C$).

\item The edges from $E_{V_-', C_-} \sqcup E_{C_+, V_+'}$ 
cross the cut $V_-'\sqcup C_+, C_-\sqcup V_+'$ according to $>_C$
(but not the cut $V_-' \sqcup C_-, C_+ \sqcup V_+'$ according to $<_C$).

\item The edges from $E_{V_-', V_+'}$ always cross the cut (no matter whether we
  order the vertices of the classes $C'$ according to $<_{C'}$ or according to
    $>_{C'}$).
\end{itemize}

Figure~\ref{fig:cw-quotient-edges} illustrates all the edge types which may cross the cut.
We make the following claim, which implies that one of $<_C$ and its reverse will achieve the cutwidth bound:
\begin{claim}
  \label{clm:choice}
  Let $<_{G/P}$ be an ordering on $G/P$ that achieves the optimal cutwidth $x$
  on $G/P$. Let $V_-'$ (resp.\ $V_+'$) consist of all vertices of classes before
  (resp.\ after) $C$ in $<_{G/P}$.
  For every class $C$ and ordering $<_C$ on $C$, at least one of the following claims is true:
  
  \begin{itemize}
    \item For every cut $C_-, C_+$ of~$C$ according to~$<_C$, we have 
$|E_{V_-', C_+}| + |E_{C_-, V_+'}| + |E_{V_-', V_+'}|\leq 1.5x$
    \item For every cut $C_-, C_+$ of~$C$ according to~$<_C$, we have 
$|E_{V_-', C_-}| + |E_{C_+, V_+'}| + |E_{V_-', V_+'}|\leq 1.5x$
  \end{itemize}
\end{claim}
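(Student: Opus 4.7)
The plan is to prove the claim by contradiction. For each $i \in \{0, 1, \ldots, |C|\}$, let $C_-^i$ denote the initial segment of size $i$ of $<_C$ and $C_+^i$ its complement in~$C$. Abbreviate $a_i := |E_{V_-', C_-^i}|$, $b_i := |E_{V_-', C_+^i}|$, $c_i := |E_{C_-^i, V_+'}|$, $d_i := |E_{C_+^i, V_+'}|$, and $e := |E_{V_-', V_+'}|$. Define $f(i) := b_i + c_i + e$ and $g(i) := a_i + d_i + e$, which are exactly the two quantities appearing in the two alternatives. The claim then reads: either $f(i) \leq 1.5x$ for every cut $i$, or $g(i) \leq 1.5x$ for every cut $i$.

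The first step is to establish the invariant $f(i) + g(i) \leq 2x$. Observe that $a_i + b_i$ and $c_i + d_i$ do not depend on $i$, as they respectively count all edges from $V_-'$ to $C$ and from $C$ to $V_+'$. Applying the ordering $<_{G/P}$ to the two quotient cuts placed immediately before and immediately after $C$: the former contributes $a_i + b_i + e$ edges to the cut in $G/P$, the latter contributes $c_i + d_i + e$, and both are bounded by $x$ (the cutwidth of $G/P$). Summing yields $f(i) + g(i) = (a_i + b_i) + (c_i + d_i) + 2e \leq 2x$, so that $f(i) > 1.5x$ forces $g(i) < 0.5x$, and conversely.

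The second step is the contradiction. Suppose there exist indices $i_1, i_2$ with $f(i_1) > 1.5x$ and $g(i_2) > 1.5x$. If $i_1 = i_2$, the invariant $f + g \leq 2x$ is violated directly. Otherwise, assume $i_1 < i_2$ (the reverse case is symmetric, since swapping $V_-'$ and $V_+'$ exchanges $f$ and $g$ while preserving the quotient cutwidth). Since $a_i$ and $c_i$ are non-decreasing in $i$, set $\Delta a := a_{i_2} - a_{i_1} \geq 0$ and $\Delta c := c_{i_2} - c_{i_1} \geq 0$; the conservation laws $a_i + b_i = \mathrm{const}$ and $c_i + d_i = \mathrm{const}$ give $f(i_2) - f(i_1) = \Delta c - \Delta a$. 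Combined with $f(i_1) > 1.5x$ and $f(i_2) < 0.5x$ (the latter coming from the invariant applied at $i_2$), this yields $\Delta a - \Delta c > x$, hence $\Delta a > x$. But the bound on the first quotient cut at $i_2$ gives $\Delta a \leq a_{i_2} \leq a_{i_2} + b_{i_2} + e \leq x$, a contradiction.

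The main obstacle is identifying the right invariants: once one notices that the two quotient cuts bracketing $C$ simultaneously bound the sum $f(i) + g(i)$ by $2x$ and individually bound $a_i$ by $x$, the monotonicity of $a_i$ closes the argument. Everything else is notational bookkeeping and a short symmetry remark to halve the case analysis.
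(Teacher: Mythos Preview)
Your proof is correct and follows essentially the same approach as the paper: both arguments rest on the two quotient cuts bracketing $C$ (giving $a_i+b_i+e\leq x$ and $c_i+d_i+e\leq x$, hence $f(i)+g(i)\leq 2x$), and then exploit monotonicity/containment along the ordering of~$C$ to derive a contradiction. The only difference is organizational: the paper fixes a worst cut for~$f$ and then handles any other cut via a ``before/after'' containment case split, whereas you phrase the same idea as a direct contradiction using $\Delta a$, which is arguably a bit cleaner but not a genuinely different route.
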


This claim suffices to conclude the proof of
\cref{thm:upper}. Indeed, the first (resp.\ second) case in the claim bounds the
number of external edges crossed in cuts compatible with $<_C$ (resp., with
$>_C$). We can therefore use the claim on each class $C$ to choose to order its
vertices either according to~$<_C$ or according to~$>_C$.
Overall, this gives a compatible order $<$, of which we can show that it has
cutwidth at most $1.5x + y$. Indeed, let us consider a cut $V_-,V_+$. 
If $V_+$ is empty no edge can cross the cut, so we may assume that $V^+$ is
non-empty. Let $C$ be the first class in $<_{G/P}$ which contains a vertex from
$V^+$. Then $V_-, V_+$ is the cut induced by $C_-, C_+$ where $C_- = V_-\cap C$
and $C_+ = V_+\cap C$. We can then bound the number of external edges crossed in
the cut $V_-,V_+$ by $1.5x$ from the claim statement (because we chose to order
$C$ according to~$<_C$ or to~$>_C$ depending on which of the two statements of
the claim is true), and the number of internal
edges crossed in the cut is bounded by the number of internal edges in~$C$
crossed in the cut, which is at most~$y$ because the order on~$C$ is
either~$<_C$ or~$>_C$.
This allows us to conclude the proof of \cref{thm:upper}.

Hence, all that remains is to show Claim~\ref{clm:choice}. The intuition is that
we consider the worst cut according to the first equation: either the cut in
question satisfies the bound of $1.5x$ and we use the first case, or it does
not. In this second case, partitioning the external edges involving vertices
of~$C$ in 4 types according to that cut (like the four last sets of
Equation~\ref{eq:decomp}), we show how the violation of the first case implies
that the second case is satisfied by any cut of~$C$. The formal proof follows:

\begin{proof}[Proof of Claim~\ref{clm:choice}]
  Let us first observe that, given that the ordering $<_{G/P}$ achieves a
  cutwidth bound of at most~$x$ on the quotient multigraph~$G/P$, then by
  considering the cut right before~$C$ (i.e., the cut $V_-', C \sqcup V_+'$)
  we know that $|E_{V_-', C}| + |E_{V_-', V_+'}|\leq x$, where $E_{V_-',C}$
  denotes the external edges involving a vertex of~$C$ and a vertex of a class before~$C$ in~$<_{G/P}$.
  Consequently,
  for any cut $C_-,C_+$ of~$C$ according to~$<_C$, 
  given that $E_{V_-',C} = E_{V_-',C_-} \sqcup E_{V_-',C_+}$,
  we have:
\begin{equation}\label{eq:cutleft}
|E_{V_-', C_-}| + |E_{V_-', C_+}| + |E_{V_-', V_+'}|\leq x
\end{equation}
and similarly, considering the cut right after $C$, we obtain:
\begin{equation}\label{eq:cutright}
|E_{C_-, V_+'}| + |E_{C_+, V_+'}| + |E_{V_-', V_+'}|\leq x
\end{equation}
These 2 cuts are depicted by dotted vertical lines on the left of Figure~\ref{fig:cw-quotient-edges}.

We now fix a cut $C_-,C_+$ of~$C$ according to~$<_C$ with $C_+$ nonempty
  that maximizes the number of external edges
  crossing the cut 
  when ordering $C$ according to~$<_C$,
  i.e., which maximizes the value $n \coloneq |E_{V_-', C_+}| + |E_{C_-, V_+'}|
  + |E_{V_-', V_+'}|$ in the left-hand-side of the inequality in the first case
  of Claim~\ref{clm:choice}.
  There are two cases: either we have $n \leq 1.5x$, or we have $n > 1.5x$.
  In the first case, we can conclude immediately that the first case of
  Claim~\ref{clm:choice} holds, because our choice for the cut $C_-, C+$ makes
  $n$ an upper bound on the left-hand side of the first inequality. Hence, in the rest of the proof, we focus on the
  second case, where we have:
  \begin{equation}\label{eq:case-largecut}
|E_{V_-', C_+}| + |E_{C_-, V_+'}| + |E_{V_-', V_+'}| > 1.5x 
\end{equation}
Now, combining Equations~$\eqref{eq:cutleft}+\eqref{eq:cutright}-\eqref{eq:case-largecut}$, we deduce:
\begin{equation}\label{eq:cutothers}
|E_{V_-', C_-}| + |E_{C_+, V_+'}| < 0.5x
\end{equation}
which implies in particular the following:
  \begin{equation}
    \label{eq:mm}
    |E_{V_-', C_-}| < 0.5x
  \end{equation}
  \begin{equation}
    \label{eq:pp}
  |E_{C_+, V_+'}| < 0.5x
  \end{equation}

We will use these properties of the fixed cut $C_-,C_+$ 
  to show that the second case of Claim~\ref{clm:choice} holds.  Let us therefore show that,
  for every cut $C_-', C_+'$ of~$C$ according to~$<_C$, we can bound the
  cutwidth of the cut $C_+',C_-'$ for~$>_C$, i.e., we have the bound 
$|E_{V_-', C_-'}| + |E_{C_+', V_+'}| + |E_{V_-', V_+'}|\leq 1.5x$, which is the
  inequality in the second case of Claim~\ref{clm:choice}.
To do so, we will use partition the external edges involving~$C$ in 4 types according to the fixed cut $C_-,C_+$ chosen earlier. 

  We distinguish two cases:
  \begin{itemize}
    \item The cut $C_-',C_+'$ is ``before'' $C_-, C_+$, i.e., we have $C_-'
      \subseteq C_-$ and $C_+' \supseteq C_+$. In this case, we have $E_{V_-', C_-'} \subseteq E_{V_-',
      C_-}$, and we always have $E_{C_+', V_+'} \subseteq E_{C_+, V_+'} \sqcup
      E_{C_-, V_+'}$. Thus, we have:
      \[
        |E_{V_-', C_-'}| + |E_{C_+', V_+'}| + |E_{V_-', V_+'}| \leq 
        |E_{V_-', C_-}| + |E_{C_+, V_+'}| + |E_{C_-, V_+'}| + |E_{V_-', V_+'}|
      \]
      Combining Equations $\eqref{eq:mm}+\eqref{eq:cutright}$,
      we obtain the upper bound of~$1.5x$.
    \item The cut $C_-',C_+'$ is ``after'' $C_-, C_+$, i.e., we have $C_+'
      \subseteq C_+$ and $C_-' \supseteq C_-$. In this case, we have $E_{C_+', V_+'} \subseteq E_{C_+,
      V_+'}$, and we always have $E_{V_-', C_-'} \subseteq E_{V_-', C_-} \sqcup
      E_{V_-',C_+}$. Thus:
      \[
        |E_{V_-', C_-'}| + |E_{C_+', V_+'}| + |E_{V_-', V_+'}| \leq 
        |E_{V_-', C_-}| + |E_{V_-',C_+}| + |E_{C_+,V_+'}| + |E_{V_-', V_+'}|
      \]
      Combining Equations $\eqref{eq:cutleft}+\eqref{eq:pp}$,
      we obtain the upper bound of~$1.5x$.
  \end{itemize}
  Hence, the second case of Claim~\ref{clm:choice} then holds. This concludes the
  proof.
  \end{proof}

\begin{figure}
\centering
\begin{tikzpicture}[
	ktext/.style={right,font=\footnotesize},
	xaxis/.style={black},
	req/.style={ultra thick},
	arwext/.style={-,blue!80},
	arwin/.style={-,orange},
	arw1/.style={-,red!80,thick},
	arw2/.style={-,magenta!80,very thick},
	arw3/.style={-,DarkGreen!80,densely dotted,thick},
	arw4/.style={-,Chartreuse!80!black,dotted,ultra thick},
	every label/.style={inner sep=0.05cm},
	xscale=1
]

\draw[brown!50,fill=brown!20] (5/2,0) ellipse (1cm and .2cm) node[brown,above right=.1cm and .45cm] {$C$};
\draw[xaxis] (0,0) -- +(5,0);
\draw[dashed] (5/2,-.5) --  +(0,2) node[above] (nc) {cut};
\node[below left,font=\small,inner sep=.2cm] at (nc) {$V_-'$~~~~~~~~~~~};
\node[below right,font=\small,inner sep=.2cm] at (nc) {~~~~~~~~~~~$V_+'$};
\draw[dotted,Grey!80!black] (0+1,-.5) --  +(0,2); %
\draw[dotted,Grey!80!black] (5-1,-.5) --  +(0,2); %

\draw[arwext] (0,.1) edge[bend left=30] (5,.1);
\draw[arwin] (5/2-.8,-.04) edge[bend right=40] (5/2+.8,-.04);
\draw[arw1] (0+.5,.1) edge[bend left=30] (5/2+.5,.1);
\draw[arw2] (5-.5,-.1) edge[bend left=30] (5/2-.5,-.1);
\draw[arw3] (0+.5,.1) edge[bend left=30] (5/2-.5,.1);
\draw[arw4] (5-.5,-.1) edge[bend left=30] (5/2+.5,-.1);

\begin{scope}[xshift=6cm,yshift=1.75cm,yscale=1.3]
\draw[Grey!30,thin] (-.1,.3) rectangle +(2.2,-2.1);
\draw[arwext] (0,0) -- +(.5,0) node[ktext] {$E_{V_-', V_+'}$};
\draw[arwin] (0,-.3) -- +(.5,0) node[ktext] {$E_{C, C}$};
\draw[arw1] (0,-.6) -- +(.5,0) node[ktext] {$E_{V_-, C_+}$};
\draw[arw2] (0,-.9) -- +(.5,0) node[ktext] {$E_{C_-, V_+}$};
\draw[arw3] (0,-1.2) -- +(.5,0) node[ktext] {$E_{V_-, C_-}$};
\draw[arw4] (0,-1.5) -- +(.5,0) node[ktext] {$E_{C_+, V_+}$};

\end{scope}
\end{tikzpicture}
\caption{Partitioning edges that may contribute to the cuts before or after reversing $<_C$.}\label{fig:cw-quotient-edges}
\end{figure}
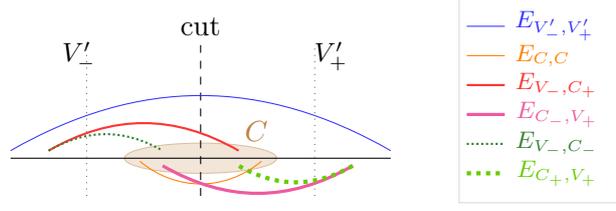

\section{Lower bound}
\label{sec:lower}

In this section, we show that the constant factor of~$1.5$ in the statement of 
Theorem~\ref{thm:upper} cannot be improved: we 
show in particular that the upper bound of
$x+y$ claimed by~\cite{FeldmannM23} does not hold. Here is the formal statement
that we show:

\begin{proposition}\label{prp:lower-bound-condensation}
For each even integer $x \geq 2$, for each integer $y \geq 1.5x$,
  there is a directed graph $H$ whose cutwidth is $1.5x + y$,
  where $x$ is the cutwidth of the condensation multigraph of~$H$,
  and $y$ is the maximum cutwidth of an SCC from $H$.
\end{proposition}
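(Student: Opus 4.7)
The plan is to construct $H$ from three SCCs $A$, $B$, $C$ with vertex sets $\{a_1,\ldots,a_x\}$, $\{b_1,\ldots,b_x\}$, and $\{c_1,\ldots,c_{2x}\}$, each realized as a \emph{bundle-path}: between consecutive SCC-vertices $v_i, v_{i+1}$ we add a directed edge $v_i \to v_{i+1}$ of multiplicity $y-1$ and a directed edge $v_{i+1} \to v_i$ of multiplicity~$1$. Each such SCC is strongly connected, and its underlying undirected multigraph is a path with $y$-multiplicity edges, of cutwidth exactly~$y$. We add \emph{external} edges $a_i \to c_{x/2+i}$ for $i \in \{1,\ldots,x\}$, so each $a_i$ targets one ``middle'' vertex of~$C$, together with $c_j \to b_j$ for $j \in \{1,\ldots,x/2\}$ and $c_j \to b_{j-x}$ for $j \in \{3x/2+1,\ldots,2x\}$, so each of the $x$ ``outer'' vertices of~$C$ points to one $b$-vertex. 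The condensation multigraph then has vertex set $\{A, B, C\}$ and two multi-edges $A \to C$ and $C \to B$ of multiplicity~$x$; its underlying undirected multigraph has cutwidth~$x$. The maximum SCC cutwidth is~$y$, so the upper bound $1.5x + y$ on the cutwidth of~$H$ follows from \cref{thm:upper}.

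The rest of the proof is the matching lower bound: every ordering $\sigma$ of $H$ has some cut of size at least $1.5x + y$. The key observation is that, for any cut of~$H$ and any bundle-path SCC, the number of internal edges of that SCC crossing the cut equals $y$ times the number of consecutive pairs $(v_i, v_{i+1})$ whose endpoints lie on opposite sides. This is at least~$y$ whenever the cut splits the SCC, and it is standard that it is at least $2y$ at \emph{some} cut whenever the induced order of the SCC in~$\sigma$ is neither sorted nor reversed. Since $y \geq 1.5x$ gives $2y \geq 1.5x + y$, we may assume that $\sigma$ induces a sorted or reversed order on each SCC. Likewise, if any cut simultaneously splits two distinct SCCs, its internal count is already $\geq 2y$; hence we may further assume that $A$, $B$, $C$ form three contiguous blocks in~$\sigma$.

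Up to reversing~$\sigma$, three block arrangements remain: $ACB$, $ABC$, and $BAC$. For $ACB$, the cut right after $c_{x/2}$ crosses one $C$-bundle ($y$ internal edges), all $x$ external edges from $A$ to middle $c$-vertices (which lie on the right of the cut), and the $x/2$ external edges from outer $c$-vertices $c_1,\ldots,c_{x/2}$ (which lie on the left) to~$B$; the total is exactly $1.5x + y$. For $ABC$ and $BAC$, a cut inside the middle block (between two $b_j$'s or two $a_i$'s) forces all $x$ external $A$-to-$C$ edges to cross, together with one bundle of the middle block and additional $c$-to-$b$ (resp.\ $a$-to-$c$) edges, summing to at least $y + 2x - 1 \geq 1.5x + y$ for $x \geq 2$. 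I expect the main difficulty to lie in the bookkeeping for these block cases, which is streamlined by the identity that every $c$-vertex carries exactly one external edge (to~$A$ or to~$B$), so that the number of external edges incident to the $c$-vertices on one side of an internal $C$-cut equals the number of such $c$-vertices.
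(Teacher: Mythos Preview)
Your construction and argument are correct, but they take a genuinely different route from the paper's. The paper works with a tiny undirected multigraph on five vertices $\{1,2,3,4,5\}$: edge $\{1,3\}$ of multiplicity $x$, edges $\{2,5\},\{4,5\}$ of multiplicity $x/2$, and edges $\{2,3\},\{3,4\}$ of multiplicity $y$, with the partition $\{\{1\},\{5\},\{2,3,4\}\}$. It then contracts vertex~$5$ (via \cref{prp:subdiv}) to get a four-vertex multigraph $K$ and checks all orderings of $K$ by hand; the directed simple graph $H$ is obtained afterwards by subdividing every edge occurrence and orienting. Your construction instead uses three non-trivial SCCs realised as bundle-paths of sizes $x,x,2x$ with single external edges placed so that each $c$-vertex carries exactly one external edge. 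The paper's approach buys a much shorter case analysis (permutations of four vertices); yours buys a construction in which the SCC structure is explicit from the start, at the cost of the two-stage reduction (monotone induced orders, then contiguous blocks) and the block-level case split.

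Two points to tighten. First, your $H$ as described is a directed \emph{multigraph} (the bundles have $y-1$ parallel arcs), whereas the statement asks for a directed graph; you should say explicitly that subdividing each edge occurrence yields a simple directed graph with the same cutwidth (\cref{prp:subdiv}), the same SCC partition (the new midpoints lie on directed paths inside their SCC or between two SCCs, respectively), and hence the same condensation cutwidth $x$ and maximal SCC cutwidth $y$. Second, the step ``hence we may assume $A,B,C$ are contiguous blocks'' silently uses that if the three SCCs are not contiguous in $\sigma$ then some cut of $\sigma$ splits two of them; this is true (the half-open span intervals $[\ell_X,r_X)$ must pairwise overlap if the blocks are not intervals), but deserves a one-line justification. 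Your description of the $BAC$ case is also slightly garbled (it is the $x$ edges $C\to B$ that all cross, with the additional edges being $A\to C$, not the other way around), though your count $y+2x-1$ is correct in both $ABC$ and $BAC$.
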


We show in Figure~\ref{fig:counterex} a directed graph obtained in our proof of
Proposition~\ref{prp:lower-bound-condensation} for the case where $x=2$ and $y=3$.
This directed graph has cutwidth 6 even though $x+y=5$.

To prove Proposition~\ref{prp:lower-bound-condensation}, we introduce
the undirected multigraph $G$ illustrated in
Figure~\ref{fig:cw-condensation-lower}. This multigraph has
vertex set $\{1,2,3,4,5\}$. It has edge $\{1,3\}$ with multiplicity $x$,
edges $\{2,5\}$
and $\{4,5\}$ each having multiplicity $x/2$, and edges $\{2,3\}$ and $\{3,4\}$
each having multiplicity $y$.
Further, we consider the vertex partitioning
$P = \{\{1\}$, $\{5\}$, $\{2,3,4\}\}$.
It is easy to see that the condensation multigraph $G/P$ has cutwidth $x$, and
that the multigraph induced by the class $\{2,3,4\}$ has cutwidth $y$ while the
multigraphs induced by the two other classes of~$P$ are trivial. It remains to
show that:

\begin{claim}
  \label{clm:cwlow}
The cutwidth of $G$ is $\min(1.5x + y, \max(2y, x))$.
\end{claim}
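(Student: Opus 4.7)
The plan is to establish the claim by matching upper and lower bounds on the cutwidth of $G$. Write $L$ for the target value $\min(1.5x + y, \max(2y, x))$.

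For the upper bound, I would exhibit two explicit orderings of the five vertices, each realizing one of the two expressions inside the minimum. The ordering $1, 2, 3, 4, 5$ has its widest cut between positions $2$ and $3$, which crosses $\{1,3\}$, $\{2,3\}$, $\{2,5\}$ with total multiplicity $x + y + x/2 = 1.5x + y$; the other three cuts have widths $x$, $y + x/2$, $x$, all bounded by $1.5x + y$. The ordering $1, 3, 2, 4, 5$ places $3$ right after $1$: the cut between positions $2$ and $3$ crosses only $\{2,3\}$ and $\{3,4\}$ for total $2y$, and the remaining cuts have widths $x$, $y + x/2$, $x$. A short case split on whether $x \leq 2y$ or $x > 2y$ shows that $y + x/2 \leq \max(2y, x)$ always holds, so this ordering has cutwidth $\max(2y, x)$. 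Taking the better of the two gives cutwidth at most~$L$.

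For the lower bound, I would show that every ordering $<$ of $V(G)$ admits a cut of width at least $L$, by case analysis on the position $p(3)$ of the hub vertex~$3$. The reversal symmetry on orderings (which preserves cutwidth) and the $2 \leftrightarrow 4$ automorphism of $G$ reduce the argument to $p(3) \in \{1, 2, 3\}$. When $p(3) = 1$, the cut isolating~$3$ crosses every edge incident to it, with total multiplicity $x + 2y \geq \max(2y, x) \geq L$. When $p(3) = 2$, let $a$ denote the vertex at position~$1$: for $a \in \{2,4\}$ the cut between positions $2$ and $3$ has width $x + y + x/2 = 1.5x + y$; for $a = 5$ it has width $2x + 2y$; and for $a = 1$ the cuts after positions $1$ and $2$ have widths $x$ and $2y$ respectively, so the cutwidth is at least $\max(2y, x) \geq L$. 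When $p(3) = 3$, the unordered pair $\{a, b\}$ preceding~$3$ falls, up to the $2 \leftrightarrow 4$ symmetry, into one of four subcases $\{1,2\}$, $\{2,5\}$, $\{1,5\}$, $\{2,4\}$; in each, a single cut, either between positions $2$ and $3$ or between $3$ and $4$, has width $1.5x + y$ (first two subcases) or $x + 2y$ (last two subcases), hence at least~$L$.

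The main obstacle is the subcase $p(3) = 2$ with $a = 1$: unlike every other subcase, no single cut attains $1.5x + y$ there, and one must combine the widths $x$ and $2y$ from two distinct cuts to conclude a cutwidth of at least $\max(2y, x)$. This is exactly the configuration realized by the second upper-bound ordering $1, 3, 2, 4, 5$, and it explains why the claim must involve $\max(2y, x)$ in addition to $1.5x + y$: whenever $\max(2y, x) < 1.5x + y$, the graph admits a minimum-cutwidth ordering of this second kind.
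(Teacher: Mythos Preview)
Your proposal is correct, but it proceeds differently from the paper's proof. The paper first invokes \cref{prp:subdiv} to observe that vertex~$5$ is a (multiedge-)subdivision vertex, and replaces~$G$ by the four-vertex multigraph~$K$ obtained by deleting~$5$ and joining $2$ and~$4$ with $x/2$ parallel edges. The lower-bound case analysis then splits on whether the restriction of the ordering to $\{2,3,4\}$ is monotone (yielding $1.5x+y$) or not (yielding $\max(2y,x)$), with a single subcase split on the position of~$1$ relative to~$3$ in the monotone case. Your argument stays with the five-vertex graph and instead splits on the position of the hub~$3$, exploiting the reversal and $2\leftrightarrow4$ symmetries to cut down the casework. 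The paper's reduction to~$K$ buys a shorter case analysis at the cost of invoking \cref{prp:subdiv}; your direct approach is more self-contained and makes the role of the special ordering $1,3,2,4,5$ (your $p(3)=2$, $a=1$ subcase) perhaps more transparent, but requires handling more configurations explicitly.
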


Let us first explain why the claim suffices to prove
Proposition~\ref{prp:lower-bound-condensation}. Let $x \geq 2$ be even and let $y \geq
1.5x$, and let us build the multigraph $G$ for these values of~$x$ and~$y$: by
Claim~\ref{clm:cwlow}, we know that $G$ has cutwidth $1.5x + y$, because $y \geq 1.5x$ ensures that
$2y \geq x$ and also ensures that $2y \geq 1.5x + y$, so the expression of the
claim evaluates to~$1.5x+y$. Now, let $G'$ be the undirected graph obtained
from~$G$ by subdividing each occurrence of each edge:
by repeated application of Proposition~\ref{prp:subdiv},
we know that $G'$ has the same cutwidth
as~$G$,
and $G'$ is now a simple undirected graph. Last, let
$H$ be the directed graph obtained by orienting the edges of~$G'$ so that the
only non-trivial connected component is the one featuring the vertices
$\{2,3,4\}$ along with the intermediate vertices added in the subdivision: as
$y\geq 1$, we can achieve this by picking different orientations for two of
the paths of length~$2$ which connect $2$ and $3$ in the subdivision, doing the
same for the paths of length~$2$ connecting $3$ and $4$, and ordering all other edges along a
topological sort to ensure that no other non-trivial SCCs exist.
Figure~\ref{fig:counterex} shows the graph $H$ obtained at the end of the
process (in the case where $x=2$ and $y=3$).

The end result
is a directed graph~$H$ whose condensation multigraph has the same cutwidth as the
quotient multigraph of~$G$ by~$P$ (hence, $x$) and where the cutwidth of SCCs is at
most $y$, but the cutwidth of~$H$ is the same as that of~$G$, i.e., it is
$1.5x+y$. This suffices to show Proposition~\ref{prp:lower-bound-condensation}.

\begin{figure}
\centering
\begin{tikzpicture}[
	ve/.style={minimum width=1.5mm,inner sep=0cm,draw,circle,Grey},
	cscc/.style={colscc!80!black},
	vscc/.style={minimum width=1.5mm,inner sep=0cm,draw,circle,colscc,inner color=colscc!50, outer color=colscc},
	mult/.style={inner sep=.5mm},
	arwblack/.style={black!80,>=stealth},
	arwscc/.style={colscc!80!black,>=stealth},
	every label/.style={inner sep=0.05cm,font=\small},
	every node/.style={font=\small},
]
\begin{scope}[
    xscale=1,
	yscale=.6
 ]
\node[ve,label={[label distance=.02cm,font=\footnotesize]180:$1$}] (p1) at (0,0) {};
\node[ve,label={[label distance=.02cm,font=\footnotesize]0:$5$}] (p2) at (6,0) {};
\node[vscc,label={[label distance=.02cm,font=\footnotesize]90:~}] (s3) at
  (3,.5) {};
\node[vscc,label={[label distance=.02cm,font=\footnotesize]90:~}] (s4) at
  (3,1.5) {};
\node[vscc,label={[label distance=.02cm,font=\footnotesize]90:~}] (s5) at
  (3,2.5) {};
\node[vscc,label={[cscc,label distance=.02cm,font=\footnotesize]45:$2$}] (p3) at (4,1.25) {};
\node[ve,label={[label distance=.02cm,font=\footnotesize]180:~}] (s1) at (1,-1) {};
\node[ve,label={[label distance=.02cm,font=\footnotesize]180:~}] (s2) at (1,1) {};
\node[vscc,label={[cscc,label distance=.05cm,font=\footnotesize]95:$3$}] (p4) at (2,0) {};
\node[vscc,label={[label distance=.02cm,font=\footnotesize]90:~}] (s6) at
  (3,-.5) {};
\node[vscc,label={[label distance=.02cm,font=\footnotesize]90:~}] (s7) at
  (3,-1.5) {};
\node[vscc,label={[label distance=.02cm,font=\footnotesize]90:~}] (s8) at
  (3,-2.5) {};
\node[vscc,label={[cscc,label distance=.02cm,font=\footnotesize]315:$4$}] (p5) at (4,-1.25) {};

\draw[arwblack,->] (p1) edge (s1);
\draw[arwblack,->] (p1) edge (s2);
\draw[arwblack,->] (s1) edge (p4);
\draw[arwblack,->] (s2) edge (p4);
\draw[arwblack,->] (p3) edge (p2);
\draw[arwblack,->] (p5) edge (p2);
\draw[arwscc,->] (p4) edge (s3);
\draw[arwscc,->] (p4) edge (s4);
\draw[arwscc,<-] (p4) edge (s5);
\draw[arwscc,->] (s3) edge (p3);
\draw[arwscc,->] (s4) edge (p3);
\draw[arwscc,<-] (s5) edge (p3);
\draw[arwscc,->] (p4) edge (s6);
\draw[arwscc,->] (p4) edge (s7);
\draw[arwscc,<-] (p4) edge (s8);
\draw[arwscc,->] (s6) edge (p5);
\draw[arwscc,->] (s7) edge (p5);
\draw[arwscc,<-] (s8) edge (p5);

\draw[dotted,brown,thin] (1.6,-2.8) rectangle (4.4,3.3);
\end{scope}
\begin{scope}[
	xshift=8cm,
	yshift=0cm,
    scale=.6
]
\node[ve,label={[label distance=.02cm,font=\footnotesize]180:$1$}] (p1) at (0,0) {};
\node[ve,label={[label distance=.02cm,font=\footnotesize]0:$5$}] (p2) at (6,0) {};
\node[ve,label={[label distance=.02cm,font=\footnotesize]180:~}] (s1) at (1.5,-1) {};
\node[ve,label={[label distance=.02cm,font=\footnotesize]180:~}] (s2) at (1.5,1) {};
\node[vscc,label={[label distance=.02cm,font=\footnotesize]225:~}] (p) at (3,0) {};

\draw[arwblack,->] (p1) edge (s1);
\draw[arwblack,->] (p1) edge (s2);
\draw[arwblack,->] (s1) edge (p);
\draw[arwblack,->] (s2) edge (p);
\draw[arwblack,->] (p) edge node[above=.5mm,mult] {$2$}(p2);
\draw[arwblack,->] (p) edge (p2);
\end{scope}
  \end{tikzpicture}
  \caption{Example of the directed graph $H$ obtained in the proof of
  Proposition~\ref{prp:lower-bound-condensation} (left), together with its
  condensation multigraph (right). For this graph, the maximum
  cutwidth of an SCC is $y=3$ (indeed the only nontrivial SCC is formed by the filled brown nodes ($\{2,3,4\}$ and the nodes connecting them), connected by dashed  brown edges), and the cutwidth of the condensation multigraph is
  $x = 2$, however it can be checked that the cutwidth of~$H$ is $1.5x + y = 6$ and not $x + y = 5$.}
  \label{fig:counterex}
\end{figure}
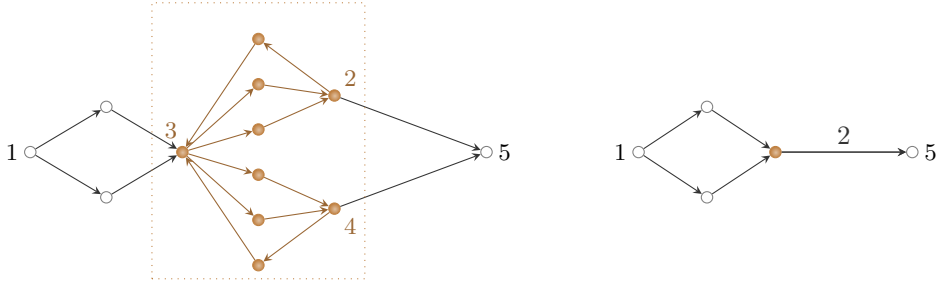

Hence, all that remains is to show the claim:

  \begin{proof}[Proof of Claim~\ref{clm:cwlow}]
First, we simplify the graph. 
To simplify the proof, we observe that by \cref{prp:subdiv}, $G$ has the same cutwidth as the multigraph $K$ obtained by removing node $5$ and replacing its incident edges by $x/2$ edges connecting $2$ to $4$. 
Consequently, we want to show that $K$ has the cutwidth stated by the claim.

In $K$, ordering $2,3,1,4$ achieves a cutwidth of $1.5x + y$, 
whereas the number of edges that cross the successive cuts of ordering $2,4,3,1$ are
    respectively $y+x/2$, $2y$, and $x$ so that this ordering has cutwidth $\max(2y, x)$.
    (It cannot be the case that the maximum is achieved by $y+x/2$ because this
    would imply $y+x/2 > 2y$ so that $x/2 > y$ but then $y+x/2 < x$.)
    This proves that the cutwidth of $G$ is at most $\min(1.5x + y, \max(2y,
    x))$.
    (Note that the upper bound of $1.5x+y$ could alternatively be viewed as a
    consequence of Theorem~\ref{thm:upper}.)

    Let us now prove a corresponding lower bound on the cutwidth of~$K$.
Let $C$ denote the subset $\{2,3,4\}$ of vertices of~$K$, and let $<_K$ denote an ordering on the $4$ nodes of $K$. 
Consider first the case where the ordering is monotone on the identifiers of
    nodes from~$C$. Up to reversing the ordering (which does not change the
    cutwidth), we may assume $2 <_K 3 <_K 4$. If $1 <_K 3$ then  
$K$ contains $1.5x + y$ edges crossing the cut $(\{2,1\}, \{3,4\})$. If $1
    >_K 3$, then $K$ contains $1.5x + y$ edges crossing the cut $(\{2,3\}, \{1,4\})$. This shows that such monotone orders have cutwidth at least $1.5x + y$.

Consider now the case where the ordering $<_K$ is non-monotone. Up to reversing
    the order, and because 2 and 4 play symmetric roles, we can assume that
    the ordering $<_K$ is $2 <_K 4 <_K 3$. Now, any cut of $K$ separating $\{2,4\}$ from $\{3\}$ is crossed by $2y$ edges, whereas any cut separating $1$ from $3$ is crossed by $x$ edges.
This shows that such non-monotone orders have cutwidth at least $\max(2y, x)$.

As a consequence, $K$ and $G$ have cutwidth $\min(1.5x + y, \max(2y, x))$, which
    concludes the proof of the claim.
\end{proof}
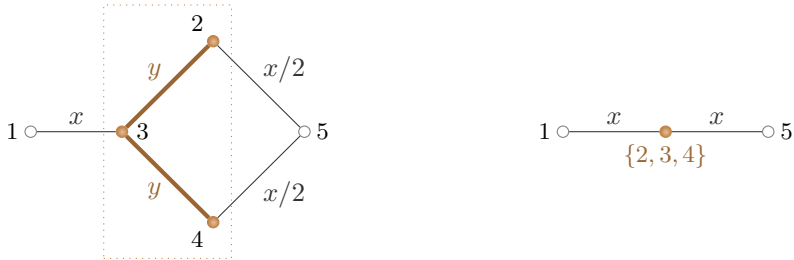
\begin{figure}
\centering
\begin{tikzpicture}[
	ve/.style={minimum width=1.5mm,inner sep=0cm,draw,circle,Grey},
	cscc/.style={colscc!80!black},
	vscc/.style={minimum width=1.5mm,inner sep=0cm,draw,circle,colscc,inner color=colscc!50, outer color=colscc},
	mult/.style={inner sep=.5mm},
	arwblack/.style={black!80,>=stealth},
	arwscc/.style={colscc!80!black,>=stealth, ultra thick},
	every label/.style={inner sep=0.05cm,font=\small},
	every node/.style={font=\small},
]
\begin{scope}[
	scale=.6
 ]
\node[ve,label={[label distance=.02cm,font=\footnotesize]180:$1$}] (p1) at (0,0) {};
\node[ve,label={[label distance=.02cm,font=\footnotesize]0:$5$}] (p2) at (6,0) {};
\node[vscc,label={[label distance=.02cm,font=\footnotesize]135:$2$}] (p3) at (4,2) {};
\node[vscc,label={[label distance=.05cm,font=\footnotesize]0:$3$}] (p4) at (2,0) {};
\node[vscc,label={[label distance=.02cm,font=\footnotesize]225:$4$}] (p5) at (4,-2) {};

\draw[arwblack] (p1) edge node[above=.5mm,mult] {$x$}(p4);
\draw[arwblack] (p3) edge node[above right,mult] {$x/2$}(p2);
\draw[arwblack] (p5) edge node[below right,mult] {$x/2$}(p2);
\draw[arwscc] (p3) edge node[above left,mult] {$y$}(p4);
\draw[arwscc] (p4) edge node[below left,mult] {$y$}(p5);

\draw[dotted,brown,thin] (1.6,-2.8) rectangle (4.4,2.8);
\end{scope}

\begin{scope}[
	xshift=7cm,
	yshift=0cm,
    scale=.45
]
\node[ve,label={[label distance=.02cm,font=\footnotesize]180:$1$}] (p1) at (0,0) {};
\node[ve,label={[label distance=.02cm,font=\footnotesize]0:$5$}] (p2) at (6,0) {};
\node[vscc,label={[colscc!80!black,label distance=.02cm,font=\footnotesize]270:$\{2,3,4\}$}] (p) at (3,0) {};

\draw[arwblack] (p1) edge node[above=.5mm,mult] {$x$} (p);
\draw[arwblack] (p) edge node[above=.5mm,mult] {$x$} (p2);
\end{scope}

\end{tikzpicture}
  \caption{Multigraph $G$ used in the proof of
  Proposition~\ref{prp:lower-bound-condensation} (left) and its condensation multigraph (right).} \label{fig:cw-condensation-lower}
\end{figure}

Let us last remark that
the lower bound shown in Proposition~\ref{prp:lower-bound-condensation} holds for the
specific graphs that we construct, but not for general graphs. More precisely,
consider a directed graph $H$ and its condensation multigraph $H'$. The cutwidth of~$H$
cannot be less than the maximal cutwidth
$y$ of an SCC of~$G$, because SCCs are subgraphs of~$G$ and cutwidth is monotone
under taking subgraphs. However, the cutwidth $x$ of~$H'$ is not a lower bound
on the cutwidth of~$G$. Indeed:

\begin{proposition}
  \label{prp:nolow}
  For each $n \geq 3$, we can build a directed graph $G_n$ where the
SCC condensation multigraph has cutwidth~$x = n$, the maximal cutwidth of an SCC
is~$2$, but the graph $G_n$ itself has constant cutwidth.
\end{proposition}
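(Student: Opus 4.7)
The plan is to construct $G_n$ as two directed cycles on $n$ vertices joined by a ``parallel matching'' of bridge arcs. Formally, take the vertex set $\{u_1,\dots,u_n\} \cup \{v_1,\dots,v_n\}$ and include the cycle arcs $u_i \to u_{(i \bmod n)+1}$ and $v_i \to v_{(i \bmod n)+1}$, together with the bridge arcs $u_i \to v_i$ for $i = 1,\dots,n$. Because every bridge points from a $u$-vertex to a $v$-vertex, the SCCs of $G_n$ are exactly $S_L = \{u_1,\dots,u_n\}$ and $S_R = \{v_1,\dots,v_n\}$. Each of these SCCs has underlying undirected graph a simple cycle of length $n$, which has cutwidth~$2$, so $y = 2$. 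The condensation multigraph consists of two vertices joined by $n$ parallel edges (one per bridge), so its cutwidth is exactly $n$, giving $x = n$ as required.

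The substantive step is to show that $G_n$ itself has $O(1)$ cutwidth. I would analyse the interleaved ordering $u_1, v_1, u_2, v_2, \dots, u_n, v_n$. Every cut respecting this ordering is either of type (a), placed between some $u_k$ and $v_k$, or of type (b), placed between some $v_k$ and $u_{k+1}$. In either case, the only $S_L$-cycle edges that can cross the cut are the ``current'' edge $\{u_k, u_{k+1}\}$ and the wrap-around $\{u_n, u_1\}$, contributing at most $2$ crossings; the $S_R$-cycle contributes at most $2$ in the same way; and a bridge $\{u_i, v_i\}$ crosses only when $u_i$ and $v_i$ lie on opposite sides of the cut, which in this interleaved order happens for at most one index $i$ (and only at type (a) cuts). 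Summing gives at most $2 + 2 + 1 = 5$ edges crossing any cut, independent of $n$, so the cutwidth of $G_n$ is at most $5$.

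I do not anticipate any real obstacle: the only care point in the case analysis is to remember the cyclic wrap-around edges $\{u_n, u_1\}$ and $\{v_n, v_1\}$ alongside the ``current'' cycle edges. Note also that since the condensation multigraph has cutwidth $n$ but $G_n$ has cutwidth bounded by~$5$, this example simultaneously shows that the cutwidth of the condensation multigraph is \emph{not} in general a lower bound on the cutwidth of the original directed graph, as claimed in the remark preceding the proposition.
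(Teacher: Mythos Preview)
Your proposal is correct and matches the paper's own proof essentially line for line: the same two-cycle-plus-matching construction, the same identification of the SCCs and condensation multigraph, and the same interleaved ordering $u_1, v_1, \dots, u_n, v_n$ yielding a cutwidth bound of~$5$. Your case analysis into type~(a) and type~(b) cuts is in fact slightly more explicit than the paper's one-line justification.
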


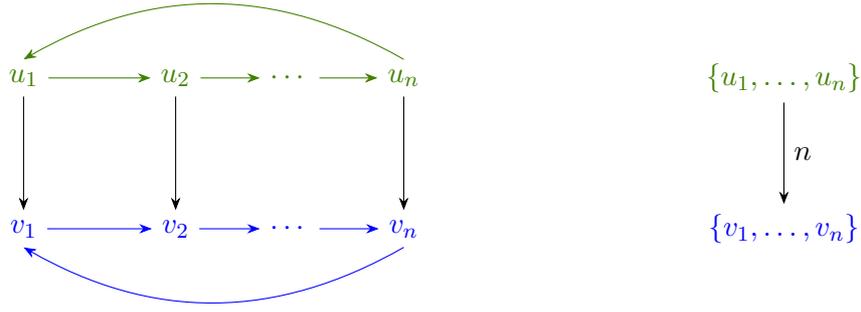
\begin{figure}
\centering
\begin{tikzpicture}[->,>=Stealth, node distance=1.5cm and 2cm, auto,
scc1/.style={Chartreuse!50!black},
scc2/.style={blue},
]
\begin{scope}
  \node[scc1] (u1) at (0, 0) {$u_1$};
  \node[scc1] (u2) at (2, 0) {$u_2$};
  \node[scc1] (udots) at (3.5, 0) {$\cdots$};
  \node[scc1] (un) at (5, 0) {$u_n$};

  \node[scc2] (v1) at (0, -2) {$v_1$};
  \node[scc2] (v2) at (2, -2) {$v_2$};
  \node[scc2] (vdots) at (3.5, -2) {$\cdots$};
  \node[scc2] (vn) at (5, -2) {$v_n$};

  \draw[scc1] (u1) -- (u2);
  \draw[scc1] (u2) -- (udots);
  \draw[scc1] (udots) -- (un);
  \draw[scc1,bend right] (un.north) to (u1.north);

  \draw[scc2] (v1) -- (v2);
  \draw[scc2] (v2) -- (vdots);
  \draw[scc2] (vdots) -- (vn);
  \draw[scc2,bend left] (vn.south) to (v1.south);

  \draw (u1) -- (v1);
  \draw (u2) -- (v2);
  \draw (un) -- (vn);
\end{scope}

\begin{scope}[
	xshift=10cm
]
    \node[scc1] (a) at (0, 0) {$\{u_1, \ldots, u_n\}$};
    \node[scc2] (b) at (0, -2) {$\{v_1, \ldots, v_n\}$};
    \draw[->] (a) edge node[right] {$n$} (b);
\end{scope}
  \end{tikzpicture}
  \caption{Graph $G_n$ used in the proof of Proposition~\ref{prp:nolow} (left),
  and its condensation multigraph (right).}
  \label{fig:nolow}
\end{figure}

\begin{proof}
  We picture the graph $G_n$ used to prove the result in Figure~\ref{fig:nolow}
  (left).

  The graph $G_n = (V_n, E_n)$ has vertices $V_n = \{u_1, \ldots, u_n\} \sqcup
  \{v_1, \ldots, v_n\}$ and has as edges $E_n$ two directed cycles $(u_1, u_2), \ldots,
  (u_{n-1}, u_n), (u_n, u_1)$ and $(v_1, v_2), \ldots,
  (v_{n-1}, v_n), (v_n, v_1)$, along with $n$ edges $(u_i, v_i)$ for each $1
  \leq i \leq n$. Thus, the SCCs of~$G_n$ are $\{u_1, \ldots, u_n\}$ and 
  $\{v_1, \ldots, v_n\}$, each of which is a cycle with cutwidth~$2$.

  The
  condensation multigraph 
  multigraph (pictured in Figure~\ref{fig:nolow}, right)
  consists of two vertices connected by $n$ parallel
  edges, so it has cutwidth~$n$. However, the ordering $u_1 < v_1 < \cdots < u_n
  < v_n$ achieves a cutwidth of at most 5: in each cut we have at most 2
  internal edges cut in each of the two SCCs, and at most one external edge cut.
  This concludes the proof.
\end{proof}

\section{Conclusion and Future Work}
\label{sec:conc}

We have shown that, for any graph $G$ and vertex partitioning of $G$, if we let
$x$ be the cutwidth of the quotient multigraph of~$G$ and let $y$ be the maximal
cutwidth of a class of~$G$, then the cutwidth of~$G$ can be bounded with
$1.5x+y$; and there are arbitrarily large values of $x$ and $y$ for which this
bound cannot be improved.

Our work also leaves open the question of whether better bounds can be shown, in
particular whether the upper bound of \cref{thm:upper} can be improved in some
settings. In particular, 
our lower bound of \cref{prp:lower-bound-condensation} is only shown in the case
where $y \geq 1.5x$, leaving open the question of whether better upper bounds
than \cref{thm:upper} can be shown in the setting where $y < 1.5x$.
Another more general question for future research is whether the vertex
partitioning approach that we studied could also be applied to other width measures beyond cutwidth.

\bibliographystyle{apalike}
\bibliography{references}

\begin{thebibliography}{}

\bibitem[Amarilli et~al., 2025]{amarilli2025edge}
Amarilli, A., Groz, B., and Wein, N. (2025).
\newblock \href{https://arxiv.org/abs/2412.01614} {Edge-minimum walk of modular
  length in polynomial time}.
\newblock In {\em
  \mbox{\href{http://itcs-conf.org/itcs25/itcs25-cfp.html}{ITCS}}}.

\bibitem[Barth et~al., 1995]{barth1995bandwidth}
Barth, D., Pellegrini, F., Raspaud, A., and Roman, J. (1995).
\newblock On bandwidth, cutwidth, and quotient graphs.
\newblock {\em RAIRO-Theoretical Informatics and Applications}, 29(6).

\bibitem[Chudnovsky et~al., 2012]{chudnovsky2012tournament}
Chudnovsky, M., Fradkin, A., and Seymour, P. (2012).
\newblock Tournament immersion and cutwidth.
\newblock {\em Journal of Combinatorial Theory, Series B}, 102(1).

\bibitem[Feldmann and Marx, 2023]{FeldmannM23}
Feldmann, A.~E. and Marx, D. (2023).
\newblock The complexity landscape of fixed-parameter directed {S}teiner
  network problems.
\newblock {\em ACM Trans. Comput. Theory}, 15(3–4).

\bibitem[Makedon and Sudborough, 1989]{MakedonS89}
Makedon, F. and Sudborough, I.~H. (1989).
\newblock On minimizing width in linear layouts.
\newblock {\em Discret. Appl. Math.}, 23(3).

\bibitem[Wikipedia, 2025]{cutwidth}
Wikipedia (2025).
\newblock Cutwidth.
\newblock \url{https://en.wikipedia.org/wiki/Cutwidth}.

\end{thebibliography}

\vfill
\doclicenseThis

\end{document}